\newtheorem{assumption}{Assumption}
\newtheorem{theorem}{Theorem}
\xpatchcmd{\thenomenclature}{%
  \section*{\nomname}
}{
}{\typeout{Success}}{\typeout{Failure}}
\newcommand{\bbR}{\mathbb{R}}
\newcommand{\ra}{\rightarrow}
\newcommand{\rlbrack}[1]{\left [ {#1} \right]}
\newcommand{\rlpar}[1]{\left ( {#1} \right)}
\newcommand{\mean}[1]{\overline{{#1}}}
\title{\LARGE \bf
Best Response Convergence for Zero-sum Stochastic Dynamic Games with Partial and Asymmetric Information
}
\author{Yuxiang Guan \quad\quad  Iman Shames \quad\quad  Tyler H. Summers
\thanks{Y. Guan and T. Summers are with the Control, Optimization, and Networks Lab, University of Texas at Dallas. I. Shames is with the CIICADA Lab, Australian National University. This work was supported by the United States Air Force Office of Scientific Research under Grants FA9550-23-1-0424 and FA2386-24-1-4014, and by the National Science Foundation under Grant ECCS-2047040.}
}
\begin{document}

\maketitle
\thispagestyle{empty}
\pagestyle{empty}

\begin{abstract}
We analyze best response dynamics for finding a Nash equilibrium of an infinite horizon zero-sum stochastic linear quadratic dynamic game (LQDG) with partial and asymmetric information. We derive explicit expressions for each player's best response within the class of pure linear dynamic output feedback control strategies where the internal state dimension of each control strategy is an integer multiple of the system state dimension. With each best response, the players form increasingly higher-order belief states, leading to infinite-dimensional internal states. However, we observe in extensive numerical experiments that the game's value converges after just a few iterations, suggesting that strategies associated with increasingly higher-order belief states eventually provide no benefit. To help explain this convergence, our numerical analysis reveals rapid decay of the controllability and observability Gramian eigenvalues and Hankel singular values in higher-order belief dynamics, indicating that the higher-order belief dynamics become increasingly difficult for both players to control and observe. Consequently, the higher-order belief dynamics can be closely approximated by low-order belief dynamics with bounded error, and thus feedback strategies with limited internal state dimension can closely approximate a Nash equilibrium.
\end{abstract}

\section{Introduction}
Dynamic game theory provides a framework for analyzing and designing feedback strategies for decision-making agents that strategically interact in dynamic and uncertain environments \cite{basar1998}. Much of this theory assumes that all model parameters and the underlying system state are common knowledge for all agents, in which case dynamic programming techniques can be used to compute equilibrium feedback strategies. However, this assumption represents a major limitation of the theory, since in many settings agents must make decisions based on only partial information about the models and state.

Dynamic games where agents have only partial and asymmetric information of the state have been studied to a lesser extent \cite{tamer1973multistage,rhodes1969differential,zheng2013decomposition,gupta2014common,kartik2019zero,hambly2023linear}, including applications to robotics \cite{schwarting2021stochastic,peters2022learning}, auctions \cite{bichler2023learning}, and target defense and cybersecurity \cite{gupta2016dynamic,huang2020dynamic}. In these games, each player aims to control a common dynamic system to optimize their individual objective functions with distinct partial noisy measurements of the state. Fundamental challenges arise because the information asymmetry introduces \textit{belief representation} and \textit{theory of mind} issues, where players must impute the belief states and estimates of other players to inform their strategy. This leads to an infinite regress of higher-order beliefs, which renders even representing strategies and value functions difficult. Perspectives on partial/asymmetric information games have yielded groundbreaking work in other fields. Identification of crucial and surprising effects of information asymmetry in game theory has led to Nobel prize-winning work in information economics \cite{harsanyi1967games, akerlof1970market, aumann1976agreeing, spence1978job, stiglitz1981credit}.


Best response dynamics \cite{fudenberg1998theory,reeves2012computing} offer an approach to find pure Nash equilibria in the multi-player games \cite{harris1998rate,hofbauer2006best,sandholm2010population}. Each player begins with an initial strategy, then each player updates their strategy with a best response while keeping the strategies of the remaining players fixed. When this process converges, all players use mutual best response strategies, thus yielding a Nash equilibrium. 
Recently, best response methods have been applied to solve multi-player dynamic games in robotic applications \cite{williams2018best,wang2019game,wang2021game,schwarting2021stochastic}, among others. 

In the context of dynamic games with partial and asymmetric information, best response dynamics offer insights into the belief representation and theory of mind issues described above. In a two-player zero-sum linear quadratic Gaussian (LQG) dynamic game (where the state dynamics are linear, the cost function is quadratic, and each player receives only a noisy linear measurement of the state at each time) the following thought experiment illustrates an infinite regress of higher-order beliefs. Initializing player 1's strategy to the zero strategy, player 2's best response is an LQG controller, combining a linear quadratic regulator with a Kalman filter to estimate the underlying state from their output measurements. The controller has an internal dimension equal to the underlying state dimension, and the conditional state distribution represents a (0th-order) belief state. Now fixing this strategy, Player 1's best response is also an LQG controller in which the Kalman filter estimates not just the underlying state but also the the state estimate of Player 2 (a 1st-order belief state). With each successive best response, the players form increasingly higher-order belief states, leading to an infinite regress where the players' internal state dimensions increase towards infinity. It appears that in general a Nash equilibrium strategy pair must be infinite-dimensional.


In this work, we first derive explicit expressions for each player's best response within the class of pure linear dynamic output feedback control strategies where the internal state dimension of each control strategy is an integer multiple of the system state dimension. Despite the infinite regress of higher-order beliefs, our extensive numerical experiments show that the game's value converges after just a few iterations, suggesting that strategies associated with increasingly higher-order belief states eventually provide no benefit. We then analyze the controllability and observability Gramian eigenvalues and Hankel singular values of both players' higher-order belief dynamics, finding rapid decay in numerical experiments\textemdash a phenomenon also seen in network controllability studies \cite{pasqualetti2014controllability, ganapathy2021performance} and model reduction for large-scale systems \cite{antoulas2002decay}. Finally, we demonstrate that these decay rates can be bounded by Cholesky estimates, and show that higher-order belief dynamics can be closely approximated by low-order belief dynamics with bounded error. Thus, we find that feedback strategies with limited internal state dimension can closely approximate a Nash equilibrium.

The remainder of this paper is organized as follows. Section \ref{sec:problem_formulation} formulates an infinite horizon zero-sum stochastic LQDG with partial and asymmetric information. We derive explicit expressions for each player's best response in Section \ref{sec:iterated_best_response}. Section \ref{sec:control_observe_evaluation} analyzes the controllability and observability Gramians, Hankel singular values, and Cholesky estimates of these values. Section \ref{sec:experimental_results} presents our numerical experiments.

\section{Problem Formulation}\label{sec:problem_formulation}
We consider an infinite horizon two-player zero-sum stochastic LQDG with dynamics
\begin{align}\label{eqn:sys_orig}
    x_{t+1} = A x_t + B^1 u_t^1 + B^2 u_t^2 + w_t,
\end{align}
where $x_t \in \mathbb{R}^n$ is the system state with $x_0 \sim \mathcal{N}(\mean{x}_0, X_0)$, $u_t^1 \in \mathbb{R}^{m_1}$ is the control input for player 1, $u_t^2 \in \mathbb{R}^{m_2}$ is the control input for player 2, the system disturbance $w_t$ is an independent and identically distributed random vector with distribution $\mathcal{N}(0, W)$ with $W \in \mathbb{R}^{n \times n}$, and $A \in \mathbb{R}^{n \times n}$, $B^1 \in \mathbb{R}^{n \times m_1}$, and $B^2 \in \mathbb{R}^{n \times m_2}$ are system matrices.

To model game settings with partial and asymmetric information, the players do not have exact knowledge of the states and actions of other players. Instead, each player receives at time $t$ a signal $y_t^i$ through noisy sensor measurements, which is modeled by the following output equations
\begin{align}
    \begin{split}
        y_t^1 ={}& C^1 x_t + v_t^1, \\
        y_t^2 ={}& C^2 x_t + v_t^2,
    \end{split}
\end{align}
where $y_t^1 \in \mathbb{R}^{p_1}$ and $y_t^2 \in \mathbb{R}^{p_2}$ are the observed outputs of player 1 and player 2, $v_t^1$ and $v_t^2$ are independent and identically distributed random vectors with distributions $\mathcal{N}(0, V^1)$ and $\mathcal{N}(0, V^2)$ 
and $C^1 \in \mathbb{R}^{p_1 \times n}$ and $C^2 \in \mathbb{R}^{p_2 \times n}$ are output matrices.

Since the players do not know the exact value of the states, they must in general make decisions based on a history of all available information. Accordingly, we define the information states for $t = 0,1,...$
\begin{align}
    \begin{split}
        I_t^1 ={}& (y_0^1, y_1^1, \ldots, y_t^1, u_0^1, u_1^1, \ldots, u_{t-1}^1), \ I_0^1 = y_0^1, \\
        I_t^2 ={}& (y_0^2, y_1^2, \ldots, y_t^2, u_0^2, u_1^2, \ldots, u_{t-1}^2), \ I_0^2 = y_0^2.
    \end{split}
\end{align}
The players seek strategies defined by information feedback functions 
that optimize an objective function over an infinite horizon. More specifically, we consider a class of pure linear dynamic output feedback strategies where the internal state dimension of each control strategy is an integer multiple of the system state dimension, i.e., we seek $\pi$ with $u_t^1 = \pi(z_t^1)$ and $\mu$ with $u_t^2 = \mu(z_t^2)$ where $z_t^1 \in \mathbb{R}^{n_1 n}$ and $z_t^2 \in \mathbb{R}^{n_2 n}$ are states estimates (or belief states) with $n_1, n_2 \in \{1, 2, ... \}$ produced by filters of the form
\begin{align}\label{eqn:filter_form}
    \begin{split}
        z_{t+1}^1 ={}& A^1 z_t^1 + B^1 u_t^1 + L^1(y_t^1 - C^1 z_t^1), \\
        z_{t+1}^2 ={}& A^2 z_t^2 + B^2 u_t^2 + L^2(y_t^2 - C^2 z_t^2).
    \end{split}
\end{align}

For a zero-sum game, player $1$ (minimizer) minimizes the objective function while player $2$ (maximizer) maximizes. The objective function is
\begin{multline}\label{eqn:obj_fun}
        J = \lim_{T \ra \infty} \frac{1}{T}\mathbf{E}_{x_0, w_t, v_t^1, v_t^2} \Biggl[\sum_{t=0}^{T-1} x_t^\top Q x_t + \rlpar{u_t^1}^\top R^1 u_t^1 \\ + \rlpar{u_t^2}^\top R^2 u_t^2\Biggr],
\end{multline}
where $Q \in \mathbb{R}^{n \times n}$, $R^1 \in \mathbb{R}^{m_1 \times m_1}$, $R^2 \in \mathbb{R}^{m_2 \times m_2}$, $Q \succeq 0$, $R^1 \succ 0$, and $R^2 \prec 0$. We assume that the penalty $R^2$ on player $2$'s input is sufficiently large so that the upper value of the game is bounded. We are interested in finding a saddle point (Nash) equilibrium policy pair ($\pi^*, \mu^*$) where
\begin{align}
    J(\pi^*, \mu) \leq J(\pi^*, \mu^*) \leq J(\pi, \mu^*), \ \forall \pi, \mu,
\end{align}
that is, no player can unilaterally improve their cost by deviating from their equilibrium policy.
\begin{assumption}\label{asmp:asmp1}
    We assume that all system parameters (comprising the system matrices $A$, $B^1$, $B^2$, disturbance covariance $W$, initial state mean $\overline{x}_0$ and covariance $X_0$, output matrices $C^1$ and $C^2$, and measurement noise covariances $V^1$ and $V^2$) and all cost parameters (comprising the state cost $Q$, minimizer's input cost $R^1$, and maximizer's input cost $R^2$) are common knowledge across all players.
\end{assumption}
Assumption \ref{asmp:asmp1} allows each player to independently reason about their best responses and those of the other player to determine equilibrium strategies. 



\section{Best Response Dynamics}\label{sec:iterated_best_response}
In this section, we analyze best response dynamics for solving a two-player zero-sum dynamic game with partial and asymmetric information. We derive expressions for the best response for each player. To help understand convergence properties of the value of the game, we evaluate controllability and observability metrics of each player's belief state dynamics in Section \ref{sec:control_observe_evaluation}.

\subsection{LQG Control of the Minimizer} \label{LQG-Min-Iteration1}
We first initialize a zero strategy for the maximizer (where $u_t^2 = 0$ $\forall t$). Then the minimizer solves a standard LQG control problem with dynamics and measurement
\begin{align}\label{eqn:sys_min}
    \begin{split}
        x_{t+1} ={}& A x_t + B^1 u_t^1 + w_t, \\
        y_t^1 ={}& C^1 x_t + v_t^1,
    \end{split}
\end{align}
and objective
\begin{align*}
    J^1  = \lim_{T \ra \infty} \frac{1}{T} \mathbf{E}_{x_0, w_t, v_t^1} \rlbrack{\sum_{t=0}^{T-1} x_t^\top Q x_t + \rlpar{u_t^1}^\top R^1 u_t^1}.
\end{align*}

The minimizer employs a steady-state Kalman filter to form an estimate $z_t^1 \in \mathbb{R}^n$ of the state given by
\begin{align*}
    z_{t+1}^1 &= A z_t^1 + B^1 u_t^1 + L^1 \rlpar{y_t^1 - C^1 z_t^1}, \\
    L^{1} &= A \Sigma^{1} \rlpar{C^1}^\top \rlpar{V^1 + C^1 \Sigma^{1} \rlpar{C^1}^\top}^{-1},
\end{align*}
where the optimal estimator gain $L^{1}$ utilizes the steady-state estimation error covariance matrix $\Sigma^1$ that solves the algebraic Riccati equation
\begin{align}\label{eqn:Sigma1}
    \begin{split}
        \Sigma^{1} ={}& W + A \Sigma^{1} A^\top - A \Sigma^{1} \rlpar{C^1}^\top \\
        & \rlpar{V^1 + C^1 \Sigma^{1} \rlpar{C^1}^\top}^{-1} C^1 \Sigma^{1} A^\top.
    \end{split}
\end{align}
The steady-state estimation error covariance is given by $\Sigma^1 = \lim_{t\rightarrow \infty} \Sigma_t^1$, where $\Sigma_t^1 = \mathbf{E} [ e_t^1 (e_t^1)^\top]$, where the estimation error $e_t^1 \coloneqq x_t - z_t^1 $ has dynamics
\begin{align*}
    \begin{split}
        e_{t+1}^1 ={}& x_{t+1} - z_{t+1}^1 \\
        ={}& \rlpar{A - L^1 C^1} e_t^1 + w_t - L^1 v_t^1.
    \end{split}
\end{align*}

The optimal strategy is the linear state estimate feedback 
\begin{align}\label{eqn:u_t_1}
    u^{1}_t = K^{1} z_t^1,
\end{align}
where the gain matrix $K^1 \in \mathbb{R}^{m_1 \times n}$ is given by
\begin{align}\label{eqn:K1}
    K^{1} = -\rlpar{R^1 + \rlpar{B^1}^\top P^{1} B^1}^{-1} \rlpar{B^1}^\top P^{1} A,
\end{align}
where the cost matrix $P^{1}$ solves the algebraic Riccati equation
\begin{align}\label{eqn:P1}
      \begin{split}
        P^{1} ={}& Q + A^\top P^{1} A - A^\top P^{1} B^1 \\
            & \rlpar{R^1 + \rlpar{B^1}^\top P^{1} B^1}^{-1} \rlpar{B^1}^\top P^{1} A.
    \end{split}
\end{align}
There exist unique solutions $P^1$ and $\Sigma^1$ of the Riccati equations if the system \eqref{eqn:sys_min} is controllable and observable. 

The closed-loop state and state estimate dynamics are
{\small
\begin{align*} \nonumber
     \begin{bmatrix}
         x_{t+1} \\ z_{t+1}^1
     \end{bmatrix} &= \underbrace{\begin{bmatrix}
        A & B^1 K^{1} \\
        L^{1} C^1 & A + B^1 K^{1} - L^{1} C^1
    \end{bmatrix}}_{:= \bar A} \begin{bmatrix}
         x_{t} \\ z_{t}^1
     \end{bmatrix} + \underbrace{\begin{bmatrix}
        I & 0 \\
        0 & L^1
    \end{bmatrix}}_{F} \begin{bmatrix}
         w_{t} \\ v_{t}^1
     \end{bmatrix}.
\end{align*}
}
The optimal cost of the minimizer is then
\begin{align*}
    J^{1*} = Tr \rlpar{\bar P \bar W} =  Tr \rlpar{\bar \Sigma \bar Q},
\end{align*}
where $$\bar W =  F \begin{bmatrix}
        W & 0 \\
        0 & V^1
    \end{bmatrix} F^\top, \quad \bar Q =  \begin{bmatrix}
        Q & 0 \\
        0 & \rlpar{K^{1}}^\top R^1 K^{1}
    \end{bmatrix},$$ and $\bar P$ and $\bar \Sigma$ solve the respective Lyapunov equations
\begin{align*}
    \bar P &= \bar A^\top \bar P \bar A + \bar Q, \quad\quad \bar \Sigma = \bar A \bar \Sigma \bar A^\top + \bar W.
\end{align*}

\subsection{Best Response from the Maximizer}\label{subsec:BR_Max}
Having analyzed the minimizer's optimal strategy in the previous section, we now turn to finding the best response for the maximizer with the minimizer's strategy fixed. The dynamic system now faced by the maximizer incorporates both the state and the minimizer's state estimate (0th-order belief). We define an augmented system
\begin{align}\label{eqn:sys_max}
    X_{t+1}^2 = \overline{A}^2 X_t^2 + \overline{B}^2 u_t^2 + F^2 \Tilde{w}_t^2,
\end{align}
with augmented state $X_t^2 \coloneqq [x_t^\top \quad (z_t^1)^\top]^\top \in \mathbb{R}^{2n}$, where $\Tilde{w}_t^2 = [w_t^\top \quad (v_t^1)^\top]^\top$ and
\begin{align*}
    &\overline{A}^2 = 
    \begin{bmatrix}
        A & B^1 K^{1} \\
        L^{1} C^1 & A + B^1 K^{1} - L^{1} C^1
    \end{bmatrix}, \
    \overline{B}^2 = 
    \begin{bmatrix}
        B^2 \\
        0
    \end{bmatrix}, \\
    &F^2 = 
    \begin{bmatrix}
        I & 0 \\
        0 & L^{1}
    \end{bmatrix}, \ \text{and} \
    \overline{W}^2 = 
    \mathbf{E} \rlbrack{\Tilde{w}_t^2 \rlpar{\Tilde{w}_t^2}^\top} = 
    \begin{bmatrix}
        W & 0 \\
        0 & V^1
    \end{bmatrix}.
\end{align*}
The maximizer's measurement is reformulated as
\begin{align}\label{eqn:sys_max_output}
    y_t^2 = \overline{C}^2 X_t^2 + v_t^2,
\end{align}
where $\overline{C}^2 = [C^2 \quad 0]$. Given the optimal strategy \eqref{eqn:u_t_1} of the minimizer, we reformulate the objective function \eqref{eqn:obj_fun} of the maximizer as
\begin{equation} \nonumber
    J^2 = \lim_{T \ra \infty} \frac{1}{T} \mathbf{E}_{X_0^2, \Tilde{w}_t^2, v_t^2} \Biggl[\sum_{t=0}^{T-1} \rlpar{X_t^2}^\top \overline{Q}^2 X_t^2 + \rlpar{u_t^2}^\top R^2 u_t^2 \Biggr],
\end{equation}
where
\begin{align*}
    \overline{Q}^2 = 
    \begin{bmatrix}
        Q & 0 \\
        0 & \rlpar{K^{1}}^\top R^1 K^{1}
    \end{bmatrix}.
\end{align*}

The maximizer's best response involves employing a Kalman filter to estimate the \emph{augmented} state, which requires estimating \emph{both} the system state and the state estimate of the minimizer. Thus, the maximizer constructs a 1st-order belief (its belief about the minimizer's belief about the state). The maximizer's belief state $z_t^2 \in \mathbb{R}^{2n}$ is obtained through a state estimator of the form
\begin{align*}
    z_{t+1}^2 = \overline{A}^2 z_t^2 + \overline{B}^2 u_t^2 + L^2(y_t^2 - \overline{C}^2 z_t^2).
\end{align*}
Defining the estimation error $e_t^2 \coloneqq X_t^2 - z_t^2$, the estimation error dynamics are
\begin{align*}
    e_{t+1}^2 ={}& X_{t+1}^2 - z_{t+1}^2 \\
                       ={}& \rlpar{\overline{A}^2 - L^2 \overline{C}^2} e_t^2 + F^2 \Tilde{w}_t^2 - L^2 v_t^2.
\end{align*}    
Again using standard Kalman filtering results, the optimal steady-state estimator gain is 
\begin{align}\label{eqn:L2}
    L^{2} = \overline{A}^2 \Sigma^{2} \rlpar{\overline{C}^2}^\top \rlpar{V^2 + \overline{C}^2 \Sigma^{2} \rlpar{\overline{C}^2}^\top}^{-1},
\end{align}
where the steady-state error covariance matrix $\Sigma^{2} = \lim_{t \rightarrow \infty}\mathbf{E}[e_t^2 (e_t^2)^\top] $ solves the algebraic Riccati equation
\begin{align}\label{eqn:Sigma2}
    \begin{split}
        \Sigma^{2} ={}& F^2 \overline{W}^2 \rlpar{F^2}^\top + \overline{A}^2 \Sigma^{2} \rlpar{\overline{A}^2}^\top - \overline{A}^2 \Sigma^{2} \rlpar{\overline{C}^2}^\top \\
        & \rlpar{V^2 + \overline{C}^2 \Sigma^2 \rlpar{\overline{C}^2}^\top}^{-1} \overline{C}^2 \Sigma^{2} \rlpar{\overline{A}^2}^\top.
    \end{split}
\end{align}
This equation has a unique solution if the \emph{augmented} system \eqref{eqn:sys_max}, \eqref{eqn:sys_max_output} is observable. 

The best response strategy of the maximizer is the linear (augmented/belief) state estimate feedback
\begin{align*}
    u^{2}_t = K^{2} z^2_t,
\end{align*}
where the feedback gain matrix $K^{2} \in \mathbb{R}^{m_2 \times 2n}$ is given by 
\begin{align}\label{eqn:K2}
    K^{2} = -\rlpar{R^2 + \rlpar{\overline{B}^2}^\top P^{2} \overline{B}^2}^{-1} \rlpar{\overline{B}^2}^\top P^{2} \overline{A}^2,
\end{align}
and the cost matrix $P^{2}$ solves the algebraic Riccati equation
\begin{align}\label{eqn:P2}
    \begin{split}
        P^{2} ={}& \overline{Q}^2 + \rlpar{\overline{A}^2}^\top P^{2} \overline{A}^2 - \rlpar{\overline{A}^2}^\top P^{2} \overline{B}^2 \\ 
        &\rlpar{R^2 + \rlpar{\overline{B}^2}^\top P^{2} \overline{B}^2}^{-1} \rlpar{\overline{B}^2}^\top P^{2} \overline{A}^2.
    \end{split}
\end{align}
Importantly, we emphasize that maximizer's optimal cost is bounded only when a solution exists where $R^2 + \rlpar{\overline{B}^2}^\top P^{2} \overline{B}^2 \prec 0$, which requires that the maximizer input penalty $R^2$ is sufficiently negative definite. 

The closed-loop state and belief state dynamics are
{\small
\begin{align*} \nonumber
     \begin{bmatrix}
         X^2_{t+1} \\ z_{t+1}^2
     \end{bmatrix} &= \underbrace{\begin{bmatrix}
        \overline{A}^2 & \overline{B}^2 K^{2} \\
        L^{2} \overline{C}^2 & \overline{A}^2 + \overline{B}^2 K^{2} - L^{2} \overline{C}^2
    \end{bmatrix}}_{:= \tilde A} \begin{bmatrix}
         X^2_{t} \\ z_{t}^2
     \end{bmatrix} \\ &+ \underbrace{\begin{bmatrix}
        F^2 & 0 \\
        0 & L^2
    \end{bmatrix}}_{\tilde F} \begin{bmatrix}
         \tilde w^2_{t} \\ v_{t}^2
     \end{bmatrix}.
\end{align*}
}
The optimal cost of the maximizer is then
\begin{align*}
    J^{2*} = Tr \rlpar{\tilde P \tilde W} =  Tr \rlpar{\tilde \Sigma \tilde Q},
\end{align*}
where $$\tilde W =  \tilde F \begin{bmatrix}
        \overline{W}^2 & 0 \\
        0 & V^2
    \end{bmatrix} \tilde F^\top, \quad \tilde Q =  \begin{bmatrix}
        \overline{Q}^2 & 0 \\
        0 & \rlpar{K^{2}}^\top R^2 K^{2}
    \end{bmatrix},$$ and $\tilde P$ and $\tilde \Sigma$ solve the respective Lyapunov equations
\begin{align*}
    \tilde P &= \tilde A^\top \tilde P \tilde A + \tilde Q, \quad\quad \tilde \Sigma = \tilde A \tilde \Sigma \tilde A^\top + \tilde W. 
\end{align*}


\subsection{Best Response Dynamics}\label{sec:best_response_dynamics}
We now describe explicit expressions for general best response dynamics that, upon convergence, yield a Nash equilibrium within our pure dynamic output feedback strategy class. 
Below we summarize the best response dynamics for each player for our dynamic game with partial and asymmetric information. In each iteration, the opposing player's strategy is fixed, and each player faces an LQG control problem, but for an augmented system that includes both the system state and increasingly higher-order belief states of the opposing player. Each iteration increases the dimension of the state estimate for the augmented belief state system to an incremental integer multiple of the original state dimension. This reformulation captures recursive definitions for the augmented belief state dynamics, objective function, and state estimator. The process iterates until both players' best responses converge. If the iterations converge, we obtain a Nash equilibrium within our strategy class.

The expressions for the best response dynamics below follow from recursively solving an LQG problem for each player.
By construction, augmented belief dynamics considered by each player at each iteration increase in dimension at each best response iteration, thus increasing the internal dimensions of the feedback strategies employed by each player.

\subsubsection{Minimizer's Best Response Characteristics}\label{prop1}

    The minimizer's best response at iteration $k=2,3,\ldots$ considers the augmented belief dynamics and measurement 
    \begin{align}\label{eqn:min_aug_sys}
        \begin{split}
            X_{t+1,k}^1 ={}& \overline{A}_k^1 X_{t,k}^1 + \overline{B}_k^1 u_{t,k}^1 + F_k^1 \Tilde{w}_{t}^1, \\
            y_{t,k}^1 ={}& \overline{C}_k^1 X_{t,k}^1 + v_{t}^1,
        \end{split}
    \end{align}
    where $X_{t,k}^1 \coloneqq [x_{t}^\top \quad (z_{t,k-1}^2)^\top]^\top \in \mathbb{R}^{(2k-1)n}$, $\Tilde{w}_{t}^1 = [w_t^\top \quad (v_t^2)^\top]^\top$,
    \begin{align*}
        \overline{A}_{k}^1 =
        \begin{bmatrix}
            A & B^2 K_{k-1}^{2} \\
            L_{k-1}^{2} C^2 & \overline{A}_{k-1}^2 + \overline{B}_{k-1}^2 K_{k-1}^2 - L_{k-1}^2 \overline{C}_{k-1}^2
        \end{bmatrix}, \ 
    \end{align*}
    \begin{align*}
         \overline{B}_k^1 =
        \begin{bmatrix}
            B^1 \\
            0
        \end{bmatrix}, \
        \overline{C}_k^1 = 
        \begin{bmatrix}
            C^1 & 0
        \end{bmatrix}, \
        F_{k}^1 = 
        \begin{bmatrix}
            I & 0 \\
            0 & L_{k-1}^{2}
        \end{bmatrix}, \
    \end{align*}
    \begin{align*}
        \overline{W}^1 = \mathbf{E} \rlbrack{\Tilde{w}_{t}^1 \rlpar{\Tilde{w}_{t}^1}^\top} =
        \begin{bmatrix}
            W & 0 \\
            0 & V^2 
        \end{bmatrix},
    \end{align*}
    with $\overline{A}_k^1 \in \bbR^{(2k-1)n \times (2k-1)n}$, $\overline{B}_k^1 \in \bbR^{(2k-1)n \times m_1}$, $\overline{C}_k^1 \in \bbR^{p_1 \times (2k-1)n}$, $F_k^1 \in \bbR^{(2k-1)n \times n+p_2}$, and $\overline{W}^1 \in \bbR^{n+p_2 \times n+p_2}$. Moreover, $K_{k-1}^2$, $L_{k-1}^2$, $\overline{A}_{k-1}^2$, $\overline{B}_{k-1}^2$, and $\overline{C}_{k-1}^2$ are as described in Section \ref{prop2} with $K_1^2 = K^2$, $L_1^2 = L^2$, $\overline{A}_1^2 = \overline{A}^2$, $\overline{B}_1^2 = \overline{B}^2$, $\overline{C}_1^2 = \overline{C}^2$ from the maximizer's initial response from Section \ref{subsec:BR_Max}. 
    The objective function of the minimizer at iteration $k$ is 
    {\small
    \begin{equation} \nonumber
        J_k^1 = \lim_{T \ra \infty} \frac{1}{T} \mathbf{E}_{X_{0,k}^1, \Tilde{w}_{t}^1,v_t^1} \Biggl[\sum_{t=0}^{T-1} X_{t,k}^{1\top} \overline{Q}_k^1 X_{t,k}^1  + u_{t,k}^{1\top} R^1 u_{t,k}^1 \Biggr],
    \end{equation}
    }
    where
    \begin{align*}
        \overline{Q}_k^1 = 
        \begin{bmatrix}
            Q & 0 \\
            0 & \rlpar{K_{k-1}^{2}}^\top R^2 K_{k-1}^{2}
        \end{bmatrix} \in \bbR^{(2k-1)n \times (2k-1)n}.
    \end{align*}
    
    The minimizer's belief state $z_{t,k}^1 \in \mathbb{R}^{(2k-1)n}$ at iteration $k$ is obtained through a state estimator and feedback strategy
    \begin{align}
        z_{t+1,k}^1 &= \overline{A}_k^1 z_{t,k}^1 + \overline{B}_k^1 u_{t,k}^1 + L_{k}^1 (y_{t,k}^1 - \overline{C}_k^1 z_{t,k}^1), \\
        u_{t,k}^1 &= K_k^1 z_{t,k}^1.
    \end{align}
    The optimal feedback gain $K_k^1 \in \bbR^{m_1 \times (2k-1)n}$, state estimator gain $L_k^1 \in \bbR^{(2k-1)n \times p_1}$, associated cost parameter $P_k^1 \in \bbR^{(2k-1)n \times (2k-1)n}$, and error covariance matrix $\Sigma_k^1 \in \bbR^{(2k-1)n \times (2k-1)n}$ are given by
    {\small
    \begin{align}
        K_k^{1} ={}& -\rlpar{R^1 + \rlpar{\overline{B}_k^1}^\top P_k^{1} \overline{B}_k^1}^{-1} \rlpar{\overline{B}_k^1}^\top P_k^{1} \overline{A}_k^1, \label{eqn:K1_k} \\
        \begin{split}\label{eqn:P1_k}
            P_k^{1} ={}& \overline{Q}_k^1 + \rlpar{\overline{A}_k^1}^\top P_k^{1} \overline{A}_k^1 - \rlpar{\overline{A}_k^1}^\top P_k^{1} \overline{B}_k^1 \\ & \rlpar{R^1 + \rlpar{\overline{B}_k^1}^\top P_k^{1} \overline{B}_k^1}^{-1} \rlpar{\overline{B}_k^1}^\top P_k^{1} \overline{A}_k^1,
        \end{split} \\
        L_k^{1} ={}& \overline{A}_k^1 \Sigma_k^{1} \rlpar{\overline{C}_k^1}^\top \rlpar{V^1 + \overline{C}_k^1 \Sigma_k^{1} \rlpar{\overline{C}_k^1}^\top}^{-1}, \label{eqn:L1_k}\\
        \begin{split}\label{eqn:Sigma1_k}
            \Sigma_k^{1} ={}& F_k^1 \overline{W}^1 \rlpar{F_k^1}^\top + \overline{A}_k^1 \Sigma_k^{1} \rlpar{\overline{A}_k^1}^\top - \overline{A}_k^1 \Sigma_k^{1} \rlpar{\overline{C}_k^1}^\top \\
            & \rlpar{V^1 + \overline{C}_k^1 \Sigma_k^1 \rlpar{\overline{C}_k^1}^\top}^{-1} \overline{C}_k^1 \Sigma_k^{1} \rlpar{\overline{A}_k^1}^\top.
        \end{split}
    \end{align}}
    Although the cost weighting matrix $\overline{Q}_k^1$ is indefinite since $Q \succeq 0$ and $(K_{k-1}^{2})^\top R^2 K_{k-1}^{2} \prec 0$, unique solutions $P_k^1$ and $\Sigma_k^1$ for the Riccati equations still exist under certain conditions, provided the system \eqref{eqn:min_aug_sys} is stabilizable and detectable. Also, the closed-loop system, under the state feedback strategy $u_{t,k}^1 = K_k^1 z_{t,k}^1$, remains stable at each iteration \cite{willemsLeast1971,molinariStable1973a}.
The closed-loop state and belief state dynamics are
    {\small
    \begin{align*} \nonumber
         \begin{bmatrix}
             X^1_{t+1,k} \\ z_{t+1,k}^1
         \end{bmatrix} &= \underbrace{\begin{bmatrix}
            \overline{A}_k^1 & \overline{B}_k^1 K_k^{1} \\
            L_k^{1} \overline{C}_k^1 & \overline{A}_k^1 + \overline{B}_k^1 K_k^{1} - L_k^{1} \overline{C}_k^1
        \end{bmatrix}}_{:= \tilde A_k^1} \begin{bmatrix}
             X^1_{t,k} \\ z_{t,k}^1
         \end{bmatrix} \\ &+ \underbrace{\begin{bmatrix}
            F_k^1 & 0 \\
            0 & L_k^1
        \end{bmatrix}}_{\tilde F_k^1} \begin{bmatrix}
             \tilde w^1_{t} \\ v_{t}^1
         \end{bmatrix}.
    \end{align*}
    }
    The optimal cost of the minimizer at each iteration is then
    \begin{align*}
        J_k^{1*} = Tr \rlpar{\tilde P_k^1 \tilde W_k^1} =  Tr \rlpar{\tilde \Sigma_k^1 \tilde Q_k^1},
    \end{align*}
    where {\small$$\tilde W_k^1 =  \tilde F_k^1 \begin{bmatrix}
        \overline{W}^1 & 0 \\
        0 & V^1
    \end{bmatrix} \rlpar{\tilde F_k^1}^\top, \ \tilde Q_k^1 =  \begin{bmatrix}
        \overline{Q}_k^1 & 0 \\
        0 & \rlpar{K_k^{1}}^\top R^1 K_k^{1}
    \end{bmatrix},$$} and $\tilde P_k^1$ and $\tilde \Sigma_k^1$ solve the respective Lyapunov equations
    {\small\begin{align*}
        \tilde P_k^1 &= \rlpar{\tilde A_k^1}^\top \tilde P_k^1 \rlpar{\tilde A_k^1} + \tilde Q_k^1, \ \tilde \Sigma_k^1 = \rlpar{\tilde A_k^1} \tilde \Sigma_k^1 \rlpar{\tilde A_k^1}^\top + \tilde W_k^1. 
    \end{align*}}

\subsubsection{Maximizer's Best Response Characteristics} \label{prop2}
    The maximizer's best response at iteration $k=1,2,3,\ldots$ considers the augmented belief dynamics and measurement
    \begin{align}\label{eqn:max_aug_sys}
        \begin{split}
            X_{t+1,k}^2 ={}& \overline{A}_k^2 X_{t,k}^2 + \overline{B}_k^2 u_{t,k}^2 + F_k^2 \Tilde{w}_{t}^2, \\
            y_{t,k}^2 ={}& \overline{C}_k^2 X_{t,k}^2 + v_{t}^2,
        \end{split}
    \end{align}
    where $X_{t,k}^2 \coloneqq [x_{t}^\top \quad (z_{t,k}^1)^\top]^\top \in \mathbb{R}^{2kn}$, $\Tilde{w}_{t}^2 = [w_t^\top \quad (v_t^1)^\top]^\top$,
    \begin{align*}
        \overline{A}_{k}^2 =
        \begin{bmatrix}
            A & \overline{B}^1 K_k^{1} \\
            L_{k}^{1} \overline{C}_k^1 & \overline{A}_{k}^1 + \overline{B}_{k}^1 K_{k}^1 - L_{k}^1 \overline{C}_{k}^1
        \end{bmatrix}, \ 
        \overline{B}_k^2 =
        \begin{bmatrix}
            B^2 \\
            0
        \end{bmatrix},
    \end{align*}
    \begin{align*}
        \overline{C}_k^2 = 
        \begin{bmatrix}
            C^2 & 0
        \end{bmatrix}, \
        F_{k}^2 = 
        \begin{bmatrix}
            I & 0 \\
            0 & L_k^{1}
        \end{bmatrix}, \
        \overline{W}^2 =
        \begin{bmatrix}
            W & 0 \\
            0 & V^1 
        \end{bmatrix},
    \end{align*}
    with $\overline{A}_k^2 \in \bbR^{2kn \times 2kn}$, $\overline{B}_k^2 \in \bbR^{2kn \times m_2}$, $\overline{C}_k^2 \in \bbR^{p_2 \times 2kn}$, $F_k^2 \in \bbR^{2kn \times n+p_1}$, and $\overline{W}^2 \in \bbR^{n+p_1 \times n+p_1}$. Furthermore, the matrices $K_k^1$, $L^1_k$, $\overline{A}_{k}^1$, $\overline{B}_{k}^1$, $\overline{C}_{k}^1$ are presented in Section \ref{prop1} with initial values $K_1^1 = K^1$, $L_1^1 = L^1$, $\overline{A}_1^1 = A$, $\overline{B}_1^1 = B^1$, $\overline{C}_1^1 = C^1$ from Section \ref{LQG-Min-Iteration1}.
    The objective function of the maximizer at iteration $k$ is 
    {\small
    \begin{equation}
        J_k^2 = \lim_{T \ra \infty} \frac{1}{T} \mathbf{E}_{X_{0,k}^2, \Tilde{w}_{t}^2,v_t^2} \Biggl[\sum_{t=0}^{T-1} \rlpar{X_{t,k}^2}^\top \overline{Q}_k^2 X_{t,k}^2 + \rlpar{u_{t,k}^2}^\top R^2 u_{t,k}^2 \Biggr],
    \end{equation}
    }
    where
    \begin{align*}
        \overline{Q}_k^2 = 
        \begin{bmatrix}
            Q & 0 \\
            0 & \rlpar{K_k^{1}}^\top R^1 K_k^{1}
        \end{bmatrix} \in \bbR^{2kn \times 2kn}.
    \end{align*}
    The maximizer's belief state $z_{t,k}^2 \in \mathbb{R}^{2kn}$ at iteration $k$ is obtained through a state estimator and feedback strategy
    \begin{align}
        z_{t+1,k}^2 &= \overline{A}_k^2 z_{t,k}^2 + \overline{B}_k^2 u_{t,k}^2 + L_{k}^2 (y_{t,k}^2 - \overline{C}_k^2 z_{t,k}^2),\\
        u_{t,k}^2 &= K_k^{2} z_{t,k}^2.
    \end{align}
    The optimal feedback gain $K_k^2 \in \bbR^{m_2 \times 2kn}$, state estimator gain $L_k^2 \in \bbR^{2kn \times p_2}$, associated cost parameter $P_k^2 \in \bbR^{2kn \times 2kn}$, and error covariance matrix $\Sigma_k^2 \in \bbR^{2kn \times 2kn}$ are given by
    {\small
    \begin{align}
        K_k^{2} ={}& -\rlpar{R^2 + \rlpar{\overline{B}_k^2}^\top P_k^{2} \overline{B}_k^2}^{-1} \rlpar{\overline{B}_k^2}^\top P_k^{2} \overline{A}_k^2, \label{eqn:K2_k} \\
        \begin{split}\label{eqn:P2_k}
            P_k^{2} ={}& \overline{Q}_k^2 + \rlpar{\overline{A}_k^2}^\top P_k^{2} \overline{A}_k^2 - \rlpar{\overline{A}_k^2}^\top P_k^{2} \overline{B}_k^2 \\ & \rlpar{R^2 + \rlpar{\overline{B}_k^2}^\top P_k^{2} \overline{B}_k^2}^{-1} \rlpar{\overline{B}_k^2}^\top P_k^{2} \overline{A}_k^2,
        \end{split} \\
        L_k^{2} ={}& \overline{A}_k^2 \Sigma_k^{2} \rlpar{\overline{C}_k^2}^\top \rlpar{V^2 + \overline{C}_k^2 \Sigma_k^{2} \rlpar{\overline{C}_k^2}^\top}^{-1}, \label{eqn:L2_k}\\
        \begin{split}\label{eqn:Sigma2_k}
            \Sigma_k^{2} ={}& F_k^2 \overline{W}^2 \rlpar{F_k^2}^\top + \overline{A}_k^2 \Sigma_k^{2} \rlpar{\overline{A}_k^2}^\top - \overline{A}_k^2 \Sigma_k^{2} \rlpar{\overline{C}_k^2}^\top \\
            & \rlpar{V^2 + \overline{C}_k^2 \Sigma_k^2 \rlpar{\overline{C}_k^2}^\top}^{-1} \overline{C}_k^2 \Sigma_k^{2} \rlpar{\overline{A}_k^2}^\top.
        \end{split}
    \end{align}
    }
    Note that the maximizer's cost is bounded only when a solution $P_k^2$ exists with $R^2 + (\overline{B}_k^2)^\top P_k^{2} \overline{B}_k^2 \prec 0$, which requires that the maximizer input penalty $R^2$ is sufficiently negative definite.
    The closed-loop state and belief state dynamics are
    {\small
    \begin{align*} \nonumber
         \begin{bmatrix}
             X^2_{t+1,k} \\ z_{t+1,k}^2
         \end{bmatrix} &= \underbrace{\begin{bmatrix}
            \overline{A}_k^2 & \overline{B}_k^2 K_k^{2} \\
            L_k^{2} \overline{C}_k^2 & \overline{A}_k^2 + \overline{B}_k^2 K_k^{2} - L_k^{2} \overline{C}_k^2
        \end{bmatrix}}_{:= \tilde A_k^2} \begin{bmatrix}
             X^2_{t,k} \\ z_{t,k}^2
         \end{bmatrix} \\ &+ \underbrace{\begin{bmatrix}
            F_k^2 & 0 \\
            0 & L_k^2
        \end{bmatrix}}_{\tilde F_k^2} \begin{bmatrix}
             \tilde w^2_{t} \\ v_{t}^2
         \end{bmatrix}.
    \end{align*}
    }
    The optimal cost of the maximizer at each iteration is then
    \begin{align*}
        J_k^{2*} = Tr \rlpar{\tilde P_k^2 \tilde W_k^2} =  Tr \rlpar{\tilde \Sigma_k^2 \tilde Q_k^2},
    \end{align*}
    where {\small$$\tilde W_k^2 =  \tilde F_k^2 \begin{bmatrix}
        \overline{W}^2 & 0 \\
        0 & V^2
    \end{bmatrix} \rlpar{\tilde F_k^2}^\top, \ \tilde Q_k^2 =  \begin{bmatrix}
        \overline{Q}_k^2 & 0 \\
        0 & \rlpar{K_k^{2}}^\top R^2 K_k^{2}
    \end{bmatrix},$$} and $\tilde P_k^2$ and $\tilde \Sigma_k^2$ solve the respective Lyapunov equations
    {\small
    \begin{align*}
        \tilde P_k^2 &= \rlpar{\tilde A_k^2}^\top \tilde P_k^2 \rlpar{\tilde A_k^2} + \tilde Q_k^2, \ \tilde \Sigma_k^2 = \rlpar{\tilde A_k^2} \tilde \Sigma_k^2 \rlpar{\tilde A_k^2}^\top + \tilde W_k^2. 
    \end{align*}}

The internal state dimensions of players' feedback strategies grow towards infinity as the best response dynamics evolve. However, our numerical experiments show that the game's value converges rapidly within a few iterations. This suggests that strategies associated with increasingly higher-order belief states eventually provide no benefit in optimizing players' objective functions. To help explain this phenomenon, we will analyze the controllability and observability Gramian eigenvalues, along with the Hankel singular values of each player's augmented belief state dynamics in the next section.

\section{Controllability and Observability Metrics for Higher-Order Beliefs}\label{sec:control_observe_evaluation}
In this section, we first define the controllability and observability Gramians and Hankel singular values at each best response iteration. We then derive a result using Cholesky estimates to bound both the decay rates of these values and the error of low-order approximations of higher-order belief dynamics under certain conditions.

\subsection{Gramians and Hankel Singular Values}
With each iteration of the best response, the augmented belief dynamics that each player must observe and control become increasingly large-scale. We aim to evaluate how difficult it is for each player to control and observe their augmented belief dynamics. One approach is to analyze the controllability and observability Gramians of the augmented belief dynamics. Eigenvectors of Gramians associated with small eigenvalues define directions in the state space that are difficult to control and observe, while large eigenvalues indicate the opposite. These Gramians can be computed by solving the Lyapunov equations
\begin{align}
    W^i_{c,k} - \overline{A}_k^i W^i_{c,k} \left( \overline{A}_k^i  \right)^\top  &= \overline{B}_k^i \left( \overline{B}_k^i \right)^\top, \label{eqn:cntr_gramian} \\
    W^i_{o,k} - \left( \overline{A}_k^i  \right)^\top W^i_{o,k} \overline{A}_k^i &= \left( \overline{C}_k^i \right)^\top \overline{C}_k^i, \label{eqn:obsv_gramian}
\end{align}
where $W^i_{c,k} \succ 0$ and $W^i_{o,k} \succ 0$ are the controllability and observability Gramians for player $i$ ($i=1,2$) at iteration $k$. 

Another common approach to quantitatively analyze a dynamic system's controllability and observability from the perspective of each state's energy is through Hankel singular values. These singular values are closely related to the above Gramian eigenvalues \cite{boyd1994linear}. Large Hankel singular values correspond to state directions that are both easy to control and easy to observe, while small singular values indicate the opposite. The Hankel singular values of player $i$ ($i=1,2$) at iteration $k$ can be computed as the square roots of the eigenvalues of the product of the Gramians $W^i_{c,k}$ and $W^i_{o,k}$ in \eqref{eqn:cntr_gramian} and \eqref{eqn:obsv_gramian}:
\begin{equation}
    \sigma_{j,k}^i = \sqrt{\lambda_j \left( W^i_{c,k} W^i_{o,k} \right)}, \quad j = 1, 2, \ldots, n_k^i,
\end{equation}
where $n_k^i$ is the dimension of player $i$'s belief state at iteration $k$.

\subsection{Cholesky Estimates of Gramian Eigenvalue Decay Rates}
In many engineering applications \cite{pasqualetti2014controllability, ganapathy2021performance, antoulas2002decay}, the Gramian eigenvalues and Hankel singular values of large-scale dynamic systems often decay rapidly. This phenomenon is closely related to model reduction by means of balanced truncation. Thus, the Gramians of these dynamic systems may be approximated by low-rank matrices. We aim to quantify the decay rates of Gramian eigenvalues and Hankel singular values of both players' higher-order belief dynamics of the best responses in Section \ref{sec:best_response_dynamics}. Relatedly, we aim to bound the error of a low-order approximation of the higher-order belief dynamics. Following the approach in \cite{antoulas2002decay}, we now establish a theorem for this bound, which quantifies decay rates and approximation error bounds using a Cholesky factorization of a Cauchy matrix built from the eigenvalues of the higher-order belief dynamics matrices.

These Cholesky factors have been proven to effectively approximate the eigenvalue decay rates of continuous-time controllability and observability Gramians using the spectrum of the dynamics matrix. Moreover, higher-order belief dynamics with rapidly decaying Gramian eigenvalues can be approximated by low-order belief dynamics, with bounded error \cite{antoulas2002decay}. For discrete-time systems, the corresponding Cholesky factors can also be utilized for quantifying both the decay rates and approximation error bounds. The discrete-time Cholesky factors are defined by
{\small
\begin{equation}\label{eqn:cholesky_factor}
    \delta_{l,k}^i = \frac{-1}{2 Re\left( \frac{\lambda_{l,k}^i - 1}{\lambda_{l,k}^i + 1} \right)} \prod_{j=1}^{l-1} \left\vert \frac{(\lambda_{l,k}^i - \lambda_{j,k}^i)(\lambda_{l,k}^{i*} + 1)}{(\lambda_{l,k}^{i*}\lambda_{j,k}^i - 1)(\lambda_{l,k}^i + 1)} \right\vert^2,
\end{equation}}
where $\lambda_{j,k}^i \in \lambda(\overline{A}_k^i)$, $j = 1,2,\ldots,n_{k}^i$. The superscript $*$ denotes complex conjugation. We define an ordering with the $l$-th eigenvalue $\lambda_{l,k}^i$ selected according to
{\small
\begin{equation}
    \lambda_{l,k}^i = \underset{\lambda_k^i}{\text{argmax}} \left\{ \frac{-1}{2Re\left( \frac{\lambda_k^i - 1}{\lambda_k^i + 1} \right)} \prod_{j=1}^{l-1}  \left\vert \frac{(\lambda_{k}^i - \lambda_{j,k}^i)(\lambda_{k}^{i*} + 1)}{(\lambda_{k}^{i*}\lambda_{j,k}^i - 1)(\lambda_{k}^i + 1)} \right\vert^2 
    \right\},
\end{equation}}
where $\lambda_k^i \in \lambda(\overline{A}_k^i)/\mathcal{Z}_{l-1,k}^i$, and $\mathcal{Z}_{l-1,k}^i\coloneqq\{ \lambda_{j,k}^i : 1 \leq j \leq l - 1 \} \subset \lambda(\overline{A}_k^i)$ denotes the first $l-1$ selected eigenvalues.

Before presenting the main theorem, we define a matrix
\begin{equation} \label{cauchy}
    C_{hj,k}^i \coloneqq \frac{-(\lambda_{h,k}^i + 1)(\lambda_{j,k}^{i*} + 1)}{2(\lambda_{h,k}^i \lambda_{j,k}^{i*} - 1)},
\end{equation}
obtained by constructing a Cauchy matrix $C(\frac{\lambda_{k}^i - 1}{\lambda_{k}^i + 1}, \frac{\lambda_{k}^i - 1}{\lambda_{k}^i + 1})$
 defined in section 2, equation (3) of \cite{antoulas2002decay}. Furthermore, in Lemma 3.2 of \cite{antoulas2002decay}, it was proved that this Cauchy matrix is Hermitian positive definite, which can be factorized as $C_k^i = L_k^i \Delta_k^i L_k^{i*}$, and further the diagonal entries of $\Delta_k^i$ are given by \eqref{eqn:cholesky_factor}, with $L_k^i$ satisfying $\| L_k^i e_j \|_\infty = 1$ ($e_j$ is a standard unit vector). Let $X_k^i$ denote the matrix of right eigenvectors of $\overline{A}_k^i$, where each column of $X_k^i$ has unit norm. We further define a matrix 
\begin{equation*}
    Z_{j,k}^i \coloneqq [X_{1,k}^i L_k^i e_j, X_{2,k}^i L_k^i e_j, \ldots, X_{m_i,k}^i L_k^i e_j],
\end{equation*}
 where $X_{p,k}^i = X_k^i \text{diag} ((X_k^i)^{-1} \tilde b_{p,k}^i)$ and $\tilde b_{p,k}^i$ is the $p$-th column of $\tilde{B}_k^i = \sqrt{2} (\overline{A}_k^i + I) \overline{B}_k^i$, for $p = 1,2,\ldots,m_i$.

\begin{theorem}\label{thm1}
    Suppose that for each player $i$ ($i=1,2$), the augmented matrices $\overline{A}_k^i$ are stable and diagonalizable at iteration $k$. Further suppose that the pairs ($\overline{A}_k^i, \overline{B}_k^i$) is controllable and the pairs ($\overline{A}_k^i, \overline{C}_k^i$) is observable. Let $\hat{W}_{c_{lm_i},k}^i = \sum_{j=1}^l \delta_{j,k}^i Z_{j,k}^i Z_{j,k}^{i*}$. 
    If $\delta_{l,k}^i / \delta_{1,k}^i < \epsilon$, then $\hat{W}_{c_{lm_i},k}^i$ has rank at most $lm_i$ and satisfies
    {\small
    \begin{equation*}
        \| W_{c,k}^i - \hat{W}_{c_{lm_i},k}^i \|_\infty \leq \epsilon \delta_{1,k}^i (m_i n_{k}^i) (n_{k}^i - l) \left( \kappa_\infty (X_k^i) \| \tilde{B}_k^i \|_1 \right)^2, 
    \end{equation*}
    }
    where $\kappa(\cdot)$ denotes the condition number of a maitrx, and $\delta_{1,k}^i \approx \| W_{c,k}^i \|_2$.
\end{theorem}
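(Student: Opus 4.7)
The plan is to transplant the argument of Antoulas--Sorensen--Zhou \cite{antoulas2002decay} from continuous time to discrete time via the Cayley transform $\lambda \mapsto (\lambda-1)/(\lambda+1)$, which is already implicit in the problem data through the definition $\tilde B_k^i = \sqrt{2}(\overline A_k^i + I)\overline B_k^i$ and in the Cauchy kernel of \eqref{cauchy}. The proof proceeds in three steps: a closed-form expansion of $W_{c,k}^i$ as a Cauchy quadratic form in eigenvector-scaled columns of $\tilde B_k^i$, a low-rank decomposition produced by the Cholesky factorization of that Cauchy matrix, and a tail estimate controlled by the decay ratio $\delta_{l,k}^i/\delta_{1,k}^i$.

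\textbf{Step 1: Closed form for $W_{c,k}^i$.} Since $\overline A_k^i$ is stable and diagonalizable, write $\overline A_k^i = X_k^i \Lambda_k^i (X_k^i)^{-1}$ with $\Lambda_k^i = \mathrm{diag}(\lambda_{1,k}^i,\ldots,\lambda_{n_k^i,k}^i)$ and unit-norm columns of $X_k^i$. Substituting this into the discrete Lyapunov equation \eqref{eqn:cntr_gramian} and solving entrywise in the eigenbasis produces a factor $1/(1-\lambda_h \lambda_j^{i*})$; rewriting this factor in terms of the Cayley-transformed eigenvalues and reshuffling the normalization into $\tilde B_k^i$ yields
$$W_{c,k}^i \;=\; \sum_{p=1}^{m_i} X_{p,k}^i \, C_k^i \, (X_{p,k}^i)^*,$$
where $C_k^i$ is exactly the Hermitian positive definite Cauchy matrix of \eqref{cauchy} and $X_{p,k}^i$ is as defined before the theorem. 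This is the discrete-time analogue of Proposition 3.1 in \cite{antoulas2002decay}.

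\textbf{Step 2: Low-rank decomposition.} Insert the Cholesky factorization $C_k^i = L_k^i \Delta_k^i (L_k^i)^*$ (whose existence and structure, including $\|L_k^i e_j\|_\infty = 1$ and the diagonal entries \eqref{eqn:cholesky_factor}, is supplied by Lemma 3.2 of \cite{antoulas2002decay}) into the expansion from Step 1 and regroup the double sum by columns of $L_k^i$. Using the definition of $Z_{j,k}^i$, this gives
$$W_{c,k}^i \;=\; \sum_{j=1}^{n_k^i} \delta_{j,k}^i \, Z_{j,k}^i (Z_{j,k}^i)^*,$$
so $\hat W_{c_{lm_i},k}^i$ is the truncation to the first $l$ terms. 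Each $Z_{j,k}^i$ has $m_i$ columns, hence the truncation has rank at most $lm_i$.

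\textbf{Step 3: Tail bound.} The residual equals $\sum_{j=l+1}^{n_k^i}\delta_{j,k}^i Z_{j,k}^i (Z_{j,k}^i)^*$, with $n_k^i - l$ terms. By the greedy ordering, $\delta_{j,k}^i \le \delta_{l,k}^i < \epsilon \delta_{1,k}^i$ for $j>l$. For each column of $Z_{j,k}^i$, the identity $\|L_k^i e_j\|_\infty = 1$ combined with $\|X_{p,k}^i\|_\infty \le \kappa_\infty(X_k^i)\|\tilde b_{p,k}^i\|_\infty$ and the $m_i$-term expansion of $Z_{j,k}^i$ yields $\|Z_{j,k}^i (Z_{j,k}^i)^*\|_\infty \le m_i n_k^i (\kappa_\infty(X_k^i)\|\tilde B_k^i\|_1)^2$. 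Applying the triangle inequality to the tail produces the claimed bound. The auxiliary remark $\delta_{1,k}^i \approx \|W_{c,k}^i\|_2$ follows because $\delta_{1,k}^i$ is the leading greedy singular value of $C_k^i$, which dominates the spectrum when decay is fast.

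\textbf{Main obstacle.} The key difficulty is Step 1: producing the Cauchy-matrix representation in discrete time. The natural kernel from the diagonalized Lyapunov equation is $1/(1-\lambda_h \lambda_j^{i*})$, and it must be matched against the continuous-time kernel $-(\lambda_h+1)(\lambda_j^{i*}+1)/[2(\lambda_h \lambda_j^{i*}-1)]$ appearing in \eqref{cauchy}. This is precisely why the input factor must be rescaled to $\tilde B_k^i = \sqrt{2}(\overline A_k^i + I)\overline B_k^i$; tracking this rescaling through the definitions of $X_{p,k}^i$ and $Z_{j,k}^i$ so that the two representations agree is the one nontrivial piece of algebra. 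Once this identification is set, Steps 2 and 3 are a direct transcription of the continuous-time argument in \cite{antoulas2002decay}.
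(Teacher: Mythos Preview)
Your proposal is correct and follows essentially the same route as the paper: both hinge on the Cayley (bilinear) transform $\lambda \mapsto (\lambda-1)/(\lambda+1)$ together with the rescaling $\tilde B_k^i = \sqrt{2}(\overline A_k^i + I)\overline B_k^i$, and then invoke the Antoulas--Sorensen--Zhou machinery. The only cosmetic difference is that the paper applies the bilinear transform at the system level, cites the Gramian-invariance property from \cite{gloverAll1984}, and then calls Theorem~3.2 of \cite{antoulas2002decay} as a black box, whereas you rework the internals of that theorem (Proposition~3.1, Lemma~3.2, and the tail estimate) directly in discrete-time eigen-coordinates; the algebra and the resulting bound are identical.
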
 
\begin{proof}
    We first transform matrices $\overline{A}_k^i$ and $\overline{B}_k^i$ to continuous-time counterparts $\tilde A_k^i = (I + \overline{A}_k^i)(\overline{A}_k^i - I)$ and $\tilde B_k^i = \sqrt{2} (\overline{A}_k^i + I) \overline{B}_k^i$ using the bilinear transformation from Section 2.2 of \cite{gloverAll1984}. This transformation ensures that the continous-time controllability Gramian of $(\tilde A_k^i, \tilde B_k^i)$ is identical to the discrete-time Gramian of $(\overline{A}_k^i, \overline{B}_k^i)$. Then we apply Theorem 3.2 of \cite{antoulas2002decay}, transforming the discrete-time eigenvalues $\lambda_{k}^i$ to continuous-time counterparts $\lambda_{c,k}^i$ through a transformation $\lambda_{c,k}^i = \frac{\lambda_{k}^i - 1}{\lambda_{k}^i + 1}$ for constructing the Cauchy matrix \eqref{cauchy}.
\end{proof}

Theorem \ref{thm1} shows that the controllability Gramians of both players' higher-order belief dynamics can be approximated by low-rank Gramians. The higher-order belief dynamic become increasingly difficult to control. Analogous statements are easily obtained for the observability Gramian, and related bounds for the Hankel singular values can be obtained as in \cite{antoulas2002decay}. Consequently, the strategies based on low-order belief states provide a good approximation of the Nash equilibrium strategies that correspond to infinite-dimensional belief states. 

\begin{figure}[tb]
    \centering
    \includegraphics[width=0.8\linewidth]{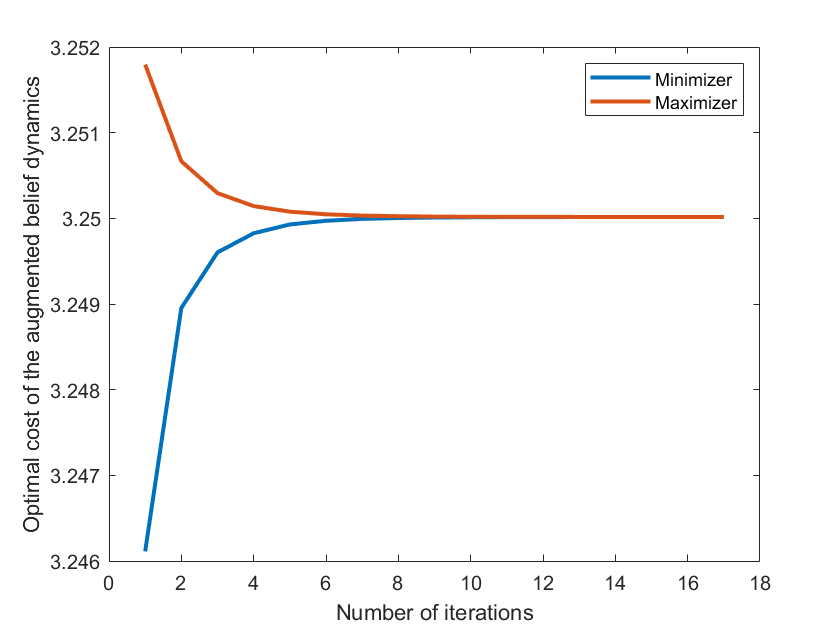}
    \caption{Optimal cost vs. best response iteration. The optimal cost converges within a few best response iterations.}
    \label{fig:convergence_IBR}
\end{figure}

\section{Numerical Experiments}\label{sec:experimental_results}
In this section, we first present the convergence performance of the best response dynamics. Next, we analyze the decay rates of the both players' controllability and observability Gramian eigenvalues and Hankel singular values. We then compare the Cholesky estimates with the actual eigenvalue decay rates to validate Theorem \ref{thm1}. Finally, we examine the decay rates of the Gramian eigenvalues and the Hankel singular values with an additional set of $1000$ randomly generated open-loop stable systems.

To illustrate the convergence performance of the best response dynamics, we consider a system for the numerical results in Section \ref{sec:opt_cost_converge} and \ref{sec:eigen_rapid_decay} with model parameters
\begin{equation*}
    A = 
    \begin{bmatrix}
        -0.3063 & -0.3580 \\
        0.5575 & -0.5273
    \end{bmatrix}, \quad
    B^1 = B^2 = 
    \begin{bmatrix}
        1 \\
        1
    \end{bmatrix},
\end{equation*}
\begin{equation*}
    C^1 = C^2 =
    \begin{bmatrix}
        1 & 1
    \end{bmatrix}, \quad
    W = I_2, \quad
    V^1 = V^2 = 1,
\end{equation*}
\begin{equation*}
    Q = I_2, \quad
    R^1 = 1, \quad R^2 = -7.5.
\end{equation*}
The value for penalty $R^2$ ensures that the upper value of the game is bounded. 
\subsection{Optimal Cost Convergence}\label{sec:opt_cost_converge}
Figure \ref{fig:convergence_IBR} shows the evolution of the optimal costs for both the maximizer and minimizer for their augmented belief dynamics at each best response iteration.
As the number of iterations increases, each player's optimal cost gradually converges to a constant value. This indicates that both players' best response strategies no longer improve their individual optimal cost. This point of convergence corresponds to a Nash equilibrium, where neither player obtains a benefit by deviating from their current strategies. The next subsection delves deeper into the mechanisms behind this convergence by analyzing the decay rates of the controllability and observability Gramian eigenvalues and Hankel singular values.
\vspace{-15pt}
\begin{figure}[htbp] 
    \centering
    \begin{subfigure}[b]{\linewidth}
        \centering
        \includegraphics[width=0.8\linewidth]{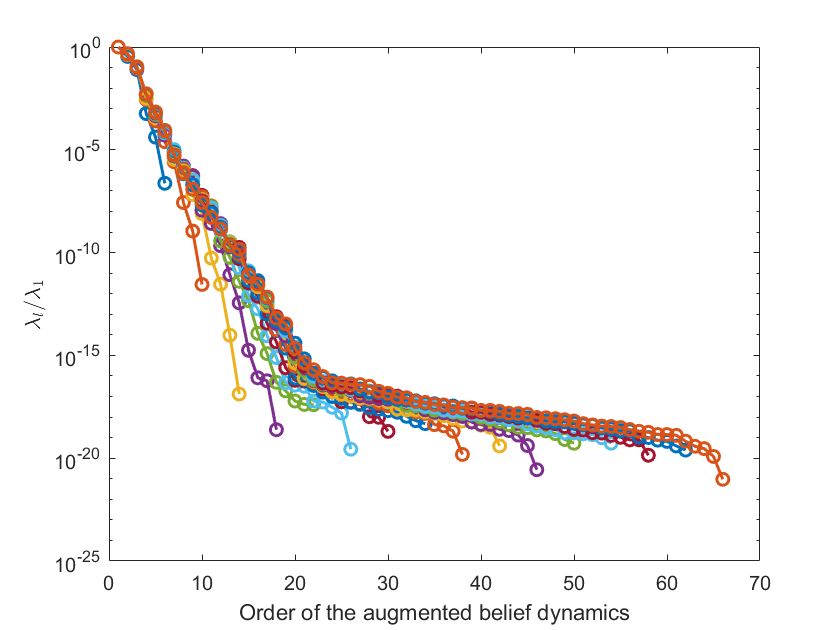}
        \caption{Rapid decay rates of the minimizer's Gramian eigenvalues.}
        \label{fig:wc_min}
    \end{subfigure}
    \vfill
    \begin{subfigure}[b]{\linewidth}
        \centering
        \includegraphics[width=0.8\linewidth]{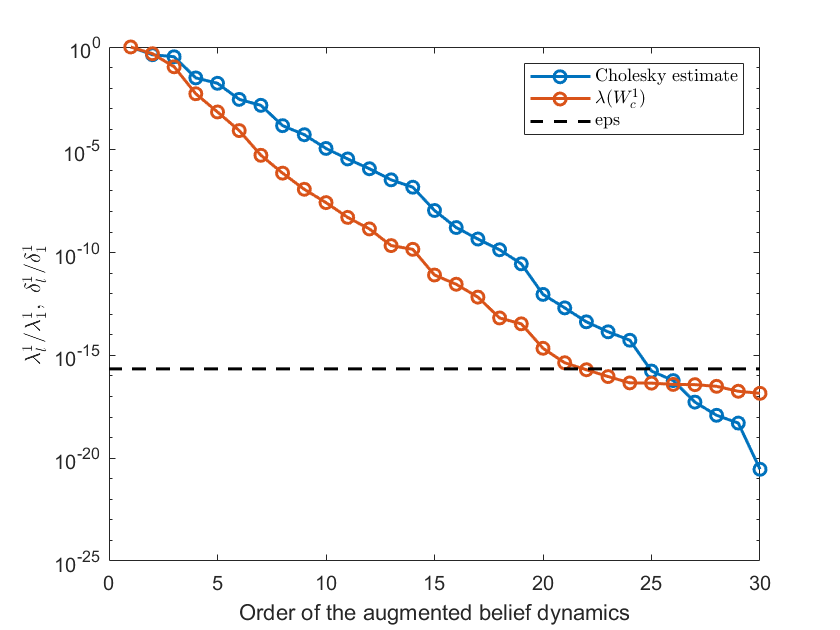}
        \caption{Comparison of the Cholesky estimates $\delta_{l}^1/\delta_{1}^1$ and discrete-time Gramian eigenvalue ratios $\lambda_l(W_c^1)/\lambda_1(W_c^1)$.}
        \label{fig:choleksy_est}
    \end{subfigure}
    \caption{The eigenvalues of the minimizer's controllability Gramian decay exponentially (each series represents the magnitudes of these eigenvalues for each best response iteration). The Cholesky ratios provide a good approximation to the actual Gramian eigenvalue decay rates.}
    \label{fig:Gramians}
\end{figure}
\vspace{-10pt}
\subsection{Rapid Decay Rates of Gramian Eigenvalues and Hankel Singular Values}\label{sec:eigen_rapid_decay}
Figure \ref{fig:wc_min} shows the rapid decay of the minimizer's controllability Gramian eigenvalues at each iteration (similar phenomena also occur in the maximizer's controllability Gramian eigenvalues and their Hankel singular values). This indicates that their higher-order augmented belief dynamics become increasingly difficult to control and observe. The use of higher-order beliefs yields diminishing and ultimately \emph{vanishing} returns for both the minimizer and maximizer. Consequently, their best response strategies converge toward a Nash equilibrium. Figure \ref{fig:choleksy_est} shows that the Cholesky estimates provide a good approximation of the decay rates of the minimizer's controllability Gramian eigenvalues by using only the spectrum of $\overline{A}_k^i$. Due to the rapid decay of each player's higher-order Gramian eigenvalues, higher-order belief dynamics can be approximated by lower-order belief dynamics with a bounded approximation error.

\subsection{Best Response Convergence for Many Problem Instances}\label{sec:extral_instances}

We performed numerical experiments across $1000$ randomly generated problem instances, where system parameters were independently drawn from a standard normal distribution. The matrix $A$ was scaled to be open-loop stable, and the penalty $R^2$ was chosen sufficiently large to guarantee a bounded upper value for the game. In all cases, the results support the same observations described above: rapid convergence of the optimal cost of each player to an equilibrium value and exponential decay rates for Gramian eigenvalues and Hankel singular values of the augmented belief dynamics. As shown in Table \ref{tab:table_eig}, approximately $75\%$ of these values have magnitudes less than $10^{-5}$, and $20-60\%$ fall below $10^{-10}$. These findings may suggest a general phenomenon in dynamic games with partial and asymmetric information, where low-order beliefs and strategies with limited internal dimension provide near-equilibrium performance.

\begin{table}[tb]
    \centering
    \caption{Proportion of each player's Gramian eigenvalues and Hankel singular values (HSV) less than $ \{10^{-5}, 10^{-10} \}$ for $1000$ random systems ($n = m_1 = m_2 = p_1 = p_2 = 1$), each entry in the table shows the percentage of eigenvalues/singular values with magnitudes below a threshold).}
    \label{tab:table_eig}
    \renewcommand{\arraystretch}{1.2}
    \begin{tabular}{|c|c|c|}
        \hline
         & $\lambda_l,\sigma_l \leq 10^{-5}$ & $\lambda_l,\sigma_l \leq 10^{-10}$ \\
         \hline
        $W_{c,\text{ave}}^1$ & $73.67\%$ & $54.67\%$ \\
        $W_{o,\text{ave}}^1$ & $75.12\%$ & $55.79\%$ \\
        $W_{c,\text{ave}}^2$ & $73.99\%$ & $55.65\%$ \\
        $W_{o,\text{ave}}^2$ & $76.08\%$ & $56.22\%$ \\
        $HSV_{\text{ave}}^1$ & $78.21\%$ & $20.43\%$ \\
        $HSV_{\text{ave}}^2$ & $77.93\%$ & $36.65\%$ \\
        \hline
    \end{tabular}
\end{table}

\section{Conclusion}\label{sec:conclusion}
In this work, we derived explicit best response expressions for each player in a zero-sum stochastic LQDG and analyzed the controllability and observability of their higher-order belief dynamics through extensive numerical experiments. Future work could extend these best response dynamics to $N$-player nonzero-sum dynamic games, as well as to games with nonlinear, non-symmetric, or time-varying dynamics.


\bibliographystyle{IEEEtran}
\bibliography{bibliography.bib}

\begin{thebibliography}{10}
\providecommand{\url}[1]{#1}
\csname url@rmstyle\endcsname
\providecommand{\newblock}{\relax}
\providecommand{\bibinfo}[2]{#2}
\providecommand\BIBentrySTDinterwordspacing{\spaceskip=0pt\relax}
\providecommand\BIBentryALTinterwordstretchfactor{4}
\providecommand\BIBentryALTinterwordspacing{\spaceskip=\fontdimen2\font plus
\BIBentryALTinterwordstretchfactor\fontdimen3\font minus
  \fontdimen4\font\relax}
\providecommand\BIBforeignlanguage[2]{{%
\expandafter\ifx\csname l@#1\endcsname\relax
\typeout{** WARNING: IEEEtran.bst: No hyphenation pattern has been}%
\typeout{** loaded for the language `#1'. Using the pattern for}%
\typeout{** the default language instead.}%
\else
\language=\csname l@#1\endcsname
\fi
#2}}

\bibitem{basar1998}
T.~Başar and G.~J. Olsder, \emph{Dynamic Noncooperative Game Theory, 2nd
  Edition}.\hskip 1em plus 0.5em minus 0.4em\relax Society for Industrial and
  Applied Mathematics, 1998.

\bibitem{tamer1973multistage}
B.~Tamer and M.~Max, ``A multistage pursuit-evasion game that admits a gaussian
  random process as a maximin control policy,'' \emph{Stochastics: An
  International Journal of Probability and Stochastic Processes}, vol.~1, no.
  1-4, pp. 25--69, 1973.

\bibitem{rhodes1969differential}
I.~Rhodes and D.~Luenberger, ``Differential games with imperfect state
  information,'' \emph{IEEE Transactions on Automatic Control}, vol.~14, no.~1,
  pp. 29--38, 1969.

\bibitem{zheng2013decomposition}
J.~Zheng and D.~A. Casta{\~n}{\'o}n, ``Decomposition techniques for markov
  zero-sum games with nested information,'' in \emph{52nd IEEE conference on
  decision and control}.\hskip 1em plus 0.5em minus 0.4em\relax IEEE, 2013, pp.
  574--581.

\bibitem{gupta2014common}
A.~Gupta, A.~Nayyar, C.~Langbort, and T.~Basar, ``Common information based
  markov perfect equilibria for linear-gaussian games with asymmetric
  information,'' \emph{SIAM Journal on Control and Optimization}, vol.~52,
  no.~5, pp. 3228--3260, 2014.

\bibitem{kartik2019zero}
D.~Kartik and A.~Nayyar, ``Zero-sum stochastic games with asymmetric
  information,'' in \emph{2019 IEEE 58th Conference on Decision and Control
  (CDC)}.\hskip 1em plus 0.5em minus 0.4em\relax IEEE, 2019, pp. 4061--4066.

\bibitem{hambly2023linear}
B.~Hambly, R.~Xu, and H.~Yang, ``Linear-quadratic gaussian games with
  asymmetric information: Belief corrections using the opponents actions,''
  \emph{arXiv preprint arXiv:2307.15842}, 2023.

\bibitem{schwarting2021stochastic}
W.~Schwarting, A.~Pierson, S.~Karaman, and D.~Rus, ``Stochastic dynamic games
  in belief space,'' \emph{IEEE Transactions on Robotics}, vol.~37, no.~6, pp.
  2157--2172, 2021.

\bibitem{peters2022learning}
L.~Peters, D.~Fridovich-Keil, L.~Ferranti, C.~Stachniss, J.~Alonso-Mora, and
  F.~Laine, ``Learning mixed strategies in trajectory games,'' \emph{arXiv
  preprint arXiv:2205.00291}, 2022.

\bibitem{bichler2023learning}
M.~Bichler, N.~Kohring, and S.~Heidekr{\"u}ger, ``Learning equilibria in
  asymmetric auction games,'' \emph{INFORMS Journal on Computing}, vol.~35,
  no.~3, pp. 523--542, 2023.

\bibitem{gupta2016dynamic}
A.~Gupta, C.~Langbort, and T.~Ba{\c{s}}ar, ``Dynamic games with asymmetric
  information and resource constrained players with applications to security of
  cyberphysical systems,'' \emph{IEEE Transactions on Control of Network
  Systems}, vol.~4, no.~1, pp. 71--81, 2016.

\bibitem{huang2020dynamic}
Y.~Huang, J.~Chen, L.~Huang, and Q.~Zhu, ``Dynamic games for secure and
  resilient control system design,'' \emph{National Science Review}, vol.~7,
  no.~7, pp. 1125--1141, 2020.

\bibitem{harsanyi1967games}
J.~C. Harsanyi, ``Games with incomplete information played by “bayesian”
  players, i--iii part i. the basic model,'' \emph{Management science},
  vol.~14, no.~3, pp. 159--182, 1967.

\bibitem{akerlof1970market}
G.~A. Akerlof, ``The market for “lemons”: Quality uncertainty and the
  market mechanism,'' \emph{The quarterly journal of economics}, vol.~84,
  no.~3, pp. 488--500, 1970.

\bibitem{aumann1976agreeing}
R.~J. Aumann, ``Agreeing to disagree,'' \emph{The Annals of Statistics},
  vol.~4, no.~6, pp. 1236--1239, 1976.

\bibitem{spence1978job}
M.~Spence, ``Job market signaling,'' in \emph{Uncertainty in economics}.\hskip
  1em plus 0.5em minus 0.4em\relax Elsevier, 1978, pp. 281--306.

\bibitem{stiglitz1981credit}
J.~E. Stiglitz and A.~Weiss, ``Credit rationing in markets with imperfect
  information,'' \emph{The American economic review}, vol.~71, no.~3, pp.
  393--410, 1981.

\bibitem{fudenberg1998theory}
D.~Fudenberg and D.~Levine, ``The theory of learning in games, economics
  learning and social evolution series,'' 1998.

\bibitem{reeves2012computing}
D.~Reeves and M.~P. Wellman, ``Computing best-response strategies in infinite
  games of incomplete information,'' \emph{arXiv preprint arXiv:1207.4171},
  2012.

\bibitem{harris1998rate}
C.~Harris, ``On the rate of convergence of continuous-time fictitious play,''
  \emph{Games and Economic Behavior}, vol.~22, no.~2, pp. 238--259, 1998.

\bibitem{hofbauer2006best}
J.~Hofbauer and S.~Sorin, ``Best response dynamics for continuous zero-sum
  games,'' \emph{Discrete and Continuous Dynamical Systems Series B}, vol.~6,
  no.~1, p. 215, 2006.

\bibitem{sandholm2010population}
W.~H. Sandholm, \emph{Population games and evolutionary dynamics}.\hskip 1em
  plus 0.5em minus 0.4em\relax MIT press, 2010.

\bibitem{williams2018best}
G.~Williams, B.~Goldfain, P.~Drews, J.~M. Rehg, and E.~A. Theodorou, ``Best
  response model predictive control for agile interactions between autonomous
  ground vehicles,'' in \emph{2018 IEEE International Conference on Robotics
  and Automation (ICRA)}.\hskip 1em plus 0.5em minus 0.4em\relax IEEE, 2018,
  pp. 2403--2410.

\bibitem{wang2019game}
Z.~Wang, R.~Spica, and M.~Schwager, ``Game theoretic motion planning for
  multi-robot racing,'' in \emph{Distributed Autonomous Robotic Systems: The
  14th International Symposium}.\hskip 1em plus 0.5em minus 0.4em\relax
  Springer, 2019, pp. 225--238.

\bibitem{wang2021game}
M.~Wang, Z.~Wang, J.~Talbot, J.~C. Gerdes, and M.~Schwager, ``Game-theoretic
  planning for self-driving cars in multivehicle competitive scenarios,''
  \emph{IEEE Transactions on Robotics}, vol.~37, no.~4, pp. 1313--1325, 2021.

\bibitem{pasqualetti2014controllability}
F.~Pasqualetti, S.~Zampieri, and F.~Bullo, ``Controllability metrics,
  limitations and algorithms for complex networks,'' \emph{IEEE Transactions on
  Control of Network Systems}, vol.~1, no.~1, pp. 40--52, 2014.

\bibitem{ganapathy2021performance}
K.~Ganapathy, J.~Ruths, and T.~Summers, ``Performance bounds for optimal and
  robust feedback control in networks,'' \emph{IEEE Transactions on Control of
  Network Systems}, vol.~8, no.~4, pp. 1754--1766, 2021.

\bibitem{antoulas2002decay}
A.~C. Antoulas, D.~C. Sorensen, and Y.~Zhou, ``On the decay rate of hankel
  singular values and related issues,'' \emph{Systems \& Control Letters},
  vol.~46, no.~5, pp. 323--342, 2002.

\bibitem{willemsLeast1971}
J.~Willems, ``Least squares stationary optimal control and the algebraic
  {{Riccati}} equation,'' \emph{IEEE Transactions on Automatic Control},
  vol.~16, no.~6, pp. 621--634, Dec. 1971.

\bibitem{molinariStable1973a}
B.~Molinari, ``The stable regulator problem and its inverse,'' \emph{IEEE
  Trans. on Automatic Control}, vol.~18, no.~5, pp. 454--459, Oct. 1973.

\bibitem{boyd1994linear}
S.~Boyd, L.~El~Ghaoui, E.~Feron, and V.~Balakrishnan, \emph{Linear matrix
  inequalities in system and control theory}.\hskip 1em plus 0.5em minus
  0.4em\relax SIAM, 1994.

\bibitem{gloverAll1984}
{\relax Keith}.~Glover, ``All optimal {{Hankel-norm}} approximations of linear
  multivariable systems and their {{L}}, {$\infty$} -error bounds{\dag},''
  \emph{International Journal of Control}, vol.~39, no.~6, pp. 1115--1193, June
  1984.

\end{thebibliography}

\end{document}